    \theoremstyle{plain}
  \newtheorem{definition}{Definition}
 \newtheorem{theorem}{Theorem}
  \newtheorem{corollary}[theorem]{Corollary}
  \newtheorem{lemma}[theorem]{Lemma}
    \theoremstyle{remark}
\newcommand{\card}[1]{\vert{}#1\vert{}}
\newcommand{\remove}[1]{}
\newcommand{\IDS}{\textsc{Identifiable Subgraph}\xspace}
\newcommand{\MAXIDS}{\textsc{Max-Identifiable Subgraph}\xspace}
\newcommand{\MINIDS}{\textsc{Min-Identifiable Subgraph}\xspace}
\newcommand{\MINIDSk}{\textsc{Min-Identifiable Subgraph}($k$)\xspace}
\newcommand{\MINIDSnk}{\textsc{Min-Identifiable Subgraph}($|L|-k$)\xspace}
\newcommand{\MCCk}{\textsc{Multicolored Clique}($k$)\xspace}
\newcommand{\NP}{\ensuremath{\mathsf{NP}}\xspace}
\newcommand{\APX}{\ensuremath{\mathsf{APX}}\xspace}
\newcommand{\FPT}{\ensuremath{\mathsf{FPT}}\xspace}
\newcommand{\W}[1]{\ensuremath{\mathsf{W[#1]}}\xspace}
\newcommand{\N}{\mathbb{N}}
\title{On the complexity of the identifiable subgraph problem, revisited}
\author{Stefan Kratsch\\
\small University of Bonn, Institute of Computer Science, Friedrich-Ebert-Allee 144, D-53113 Bonn, Germany\\
\small \texttt{kratsch@cs.uni-bonn.de}
\and
Martin Milani\v c\\
\small University of Primorska, UP IAM, Muzejski trg 2, SI-6000 Koper, Slovenia\\
\small University of Primorska, UP FAMNIT, Glagolja\v ska 8, SI-6000 Koper, Slovenia\\
\small \texttt{martin.milanic@upr.si}
}
\begin{document}
\maketitle

\begin{abstract}
A bipartite graph $G=(L,R;E)$ with at least one edge is said to be \emph{identifiable} if for every vertex $v\in L$, the subgraph induced by its non-neighbors has a matching of cardinality $\card{L}-1$. An \hbox{\emph{$\ell$-subgraph}} of $G$ is an induced subgraph of $G$ obtained by deleting from it some vertices in $L$ together with all their neighbors. The {\sc Identifiable Subgraph} problem is the problem of determining whether a given
bipartite graph contains an identifiable $\ell$-subgraph.

We show that the {\sc Identifiable Subgraph} problem is polynomially solvable, along with the version of the problem in which the task is to delete as few vertices from $L$ as possible together with all their neighbors so that the resulting $\ell$-subgraph is identifiable.
We also complement a known \APX-hardness result for the complementary problem in which the task is to minimize the number
of remaining vertices in $L$, by showing that two parameterized variants of the problem are \W{1}-hard.
\end{abstract}


\section{Introduction} \label{sec:Introduction}

A \emph{matching} in a graph is a subset of pairwise disjoint edges.  A bipartite graph $G=(L,R;E)$ with at least one edge is said to be \emph{identifiable} if for every vertex in~$L$, the subgraph of $G$ induced by its non-neighborhood has a matching of cardinality~$\card{L}-1$. Identifiable bipartite graphs were studied in several papers~\cite{DAM1,ALGO,COC,DAM2}; the property arises in the context of low-rank matrix factorization and has applications in data mining, signal processing, and computational biology. For further details on applications of notions and problems discussed in this paper, we refer to~\cite{DAM1,ALGO}.

While the recognition problem for identifiable bipartite graphs is clearly polynomial using bipartite matching algorithms, several natural algorithmic problems concerning identifiable graphs turn out to be {\sf NP}-complete (see~\cite{DAM1,ALGO,DAM2}). In~\cite{DAM1}, three problems related to finding specific identifiable subgraphs were introduced. To state these problems, we need to recall the notion of an $\ell$-subgraph of a bipartite graph (which appeared first in~\cite{DAM1} and, in a slightly modified form, which we will adopt, in~\cite{DAM2}).
For a bipartite graph $G=(L,R;E)$ and vertex sets $X \subseteq L$, $Y \subseteq R$, we denote by  $G[X,Y]$ the subgraph of $G$ induced by $X \cup Y$.

\begin{definition}
Let $G=(L,R;E)$ be a bipartite graph. For a subset $J\subseteq L$, the \emph{$\ell$-subgraph of $G$ induced by $J$} is the subgraph $G(J) = G[J, R \setminus N(L\setminus J)]$, where $N(L\setminus J)$ denotes the set of all vertices in $R$ with a neighbor in $L\setminus J$. We say that a graph $G'$ is an \emph{$\ell$-subgraph of $G$} if there exists a subset $J\subseteq L$ such that $G'=G(J)$.
\end{definition}

The following three problems are all related to finding identifiable $\ell$-subgraphs of a given graph:

\medskip
\begin{center}
\fbox{\parbox{0.89\linewidth}{\noindent
{\sc Identifiable Subgraph}
\\[.8ex]
\begin{tabular*}{.95\textwidth}{rl}
\emph{Instance:} & A bipartite graph $G = (L,R;E)$.\\
\emph{Question:} & Does $G$ have an identifiable $\ell$-subgraph?
\end{tabular*}
}}
\end{center}

\begin{center}
\fbox{\parbox{0.89\linewidth}{\noindent
{\sc Min-Identifiable Subgraph}
\\[.8ex]
\begin{tabular*}{.95\textwidth}{rl}
\emph{Instance:} & A bipartite graph $G = (L,R;E)$ and an integer $k$.\\
\emph{Question:} & Does $G$ have an identifiable $\ell$-subgraph induced by a set $J$ with $|J|\le k$?
\end{tabular*}
}}
\end{center}

\begin{center}
\fbox{\parbox{0.89\linewidth}{\noindent
{\sc Max-Identifiable Subgraph}
\\[.8ex]
\begin{tabular*}{.95\textwidth}{rl}
\emph{Instance:} & A bipartite graph $G = (L,R;E)$ and an integer $k$.\\
\emph{Question:} & Does $G$ have an identifiable $\ell$-subgraph induced by a set $J$ with $|J|\ge k$?
\end{tabular*}
}}
\end{center}

\medskip
In~\cite{DAM1}, the optimization version of the \MINIDS problem was shown to be \APX-hard.
In the same paper it was shown that all three problems are polynomially solvable for trees, as well as for bipartite graphs $G = (L,R;E)$ such that the maximum degree of vertices in $L$ is at most~$2$. In~\cite{DAM2}, restricted versions of the \IDS problem were studied, parameterizing the instances
according to the maximum degree $\Delta(R)$ of vertices in $R$.
Formally:

\begin{center}
\fbox{\parbox{0.89\linewidth}{\noindent
{\sc $k$-bounded Identifiable Subgraph}
\\[.8ex]
\begin{tabular*}{.95\textwidth}{rl}
\emph{Instance:} & A bipartite graph $G = (L,R;E)$ with $\Delta(R)\le k$.\\
\emph{Question:} & Does $G$ have an identifiable $\ell$-subgraph?
\end{tabular*}
}}
\end{center}
\medskip
\noindent

It was shown in~\cite{DAM2} that the {\sc $k$-bounded Identifiable Subgraph} problem for $k\ge 3$ is as hard as \IDS problem in general and that the
{\sc $2$-bounded Identifiable Subgraph} problem is solvable in linear time.
The complexity of the \IDS and \MAXIDS problems in general bipartite graphs was left open by previous works.

\medskip
In this paper, we establish the computational complexity of the \IDS and \MAXIDS problems, showing that both problems are solvable in polynomial time.
The key idea to our approach is the observation that if the input graph $G = (L,R;E)$ is not identifiable, then one can
compute in polynomial time a maximal subset $K\subseteq L$ no vertex of which is contained in any identifiable $\ell$-subgraph of $G$.
Such a set $K$ is non-empty and can be safely deleted from the graph together with all its neighbors, thus reducing the problem to a smaller
graph. If the algorithm finds an identifiable $\ell$-subgraph of $G$, then it in fact finds an
identifiable $\ell$-subgraph of $G$ induced by a largest possible subset of $L$, thereby also solving the
\MAXIDS problem. The proof also shows that such a subgraph is unique.

In the second part of the paper, we complement the \APX-hardness result for the optimization version of the \MINIDS problem from~\cite{DAM1} by
studying the problem from the parameterized complexity point of view. We introduce two natural parameterized variants of the \MINIDS problem and prove that both are \W{1}-hard, by giving parameterized reductions from the well-known \W{1}-hard \MCCk problem.

The paper is structured as follows. In Section~\ref{sec:prelim}, we give the necessary definitions.
In Section~\ref{sec:algorithm}, we give a polynomial time algorithm that simultaneously solves the \IDS and the \MAXIDS problems.
In Section~\ref{sec:parameterized}, we study the \NP-hard \MINIDS problem from the parameterized complexity point of view.
Section~\ref{sec:conclusion} concludes the paper with some open questions.

\section{Preliminaries}\label{sec:prelim}

All graphs considered in this paper are finite, simple, and undirected. For a graph $G$, we denote by $V(G)$ the vertex set of $G$ and by $E(G)$ its edge set.
A \emph{bipartite graph} is a graph $G = (V,E)$ such that there exists a partition of $V$ into two sets $L$ and $R$ such that $L\cap R=\emptyset$ and $E\subseteq \{\{\ell, r\}~;\ell\in L \textrm{~and~} r\in R\}$. In this paper, we will regard bipartite graphs as already \emph{bipartitioned}, that is, given together with a fixed bipartition $(L,R)$ of their vertex set, and hence use the notation $G = (L,R;E)$. For a graph $G=(V,E)$ and a subset of vertices $X\subseteq V$, $N_G(X)$ denotes the neighborhood of $X$, i.e., the set of all vertices in $V \setminus X$ that have a neighbor in $X$. For a vertex $x\in V$, we write $N_G(x)$ for $N_G(\{x\})$, and denote the \emph{degree} of $x$ with $d_G(x)=|N_G(x)|$. In $N_G(X)$, $N_G(x)$, $d_G(x)$, we shall omit the subscript $G$ if the graph is clear from the context.
A {\it clique} in a graph is a set of pairwise adjacent vertices.

A \emph{parameterized problem} is a language $Q\subseteq\Sigma^*\times\N$; the second component, $k$, of instances $(x,k)\in\Sigma^*\times\N$ is called the \emph{parameter}. A parameterized problem $Q$ is \emph{fixed-parameter tractable} (FPT) if there is a function $f\colon\N\to\N$, a constant $c$, and an algorithm $A$ that decides $(x,k)\in Q$ in time $f(k)|x|^c$ for all $(x,k)\in\Sigma^*\times\N$. Let $\FPT$ denote the class of all fixed-parameter tractable parameterized problems. A \emph{parameterized reduction} from $Q\subseteq\Sigma^*\times\N$ to $Q'\subseteq\Sigma'^*\times\N$ is a mapping $\pi\colon\Sigma^*\times\N\to\Sigma'^*\times\N$ such that there are functions $g,h\colon\N\to\N$ and a constant $c$ with: $(x,k)\in Q$ if and only if $\pi((x,k))\in Q'$, the parameter value $k'$ of $(x',k')=\pi((x,k))$ is at most $g(k)$, and $\pi((x,k))$ can be computed in time $h(k)|x|^c$. It is well known that the existence of a parameterized reduction from $Q$ to $Q'$ and $Q'\in\FPT$ imply that $Q\in\FPT$ as well, and that parameterized reducibility is transitive. Accordingly, similarly to $\mathsf{P}$ vs. $\NP$, there are hardness classes of problems that are suspected not to be FPT. In particular, it is believed that $\FPT\neq\W{1}$ and, under this assumption, a parameterized reduction from any \W{1}-hard problem rules out fixed-parameter tractability. (Here \W{1}-hardness is with respect to parameterized reductions.)

For graph-theoretic definitions not given in the paper we refer to~\cite{MR2744811,MR1367739},
for further background in matching theory to~\cite{MR859549}, and for background in
parameterized complexity to~\cite{MR3380745,DowneyF13}.

\section{A polynomial time algorithm for the \IDS and the \MAXIDS problems} \label{sec:algorithm}

In this section we give a polynomial time algorithm for the \IDS problem, the problem of determining whether a given graph $G=(L,R;E)$ has an identifiable $\ell$-subgraph. As a corollary of our approach we will also obtain a polynomial time algorithm for the \MAXIDS problem.

The key ingredient for the algorithm is the following lemma.

\begin{lemma}\label{lemma:ptime:key}
Let $G=(L,R;E)$ be a non-identifiable bipartite graph with at least one edge and let $v\in L$ such that
there is no matching of $L\setminus \{v\}$ into $R\setminus N(v)$.
Let $K$ be an inclusion-wise minimal subset of $L\setminus \{v\}$ that has no matching into $R\setminus N(v)$.
Such a set $K$ is nonempty and always exists.
Moreover, no identifiable $\ell$-subgraph of $G$ contains a vertex of $K$.
\end{lemma}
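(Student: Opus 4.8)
The existence of the set $K$ is immediate: since $L \setminus \{v\}$ itself has no matching into $R \setminus N(v)$, the family of subsets of $L \setminus \{v\}$ with no matching into $R \setminus N(v)$ is nonempty, so it has an inclusion-wise minimal member $K$. This $K$ is nonempty because the empty set trivially has a (trivial) matching into $R \setminus N(v)$. So the real content is the last sentence. First I would record the combinatorial meaning of minimality via Hall's theorem: since $K$ has no matching into $R \setminus N(v)$ but every proper subset does, $K$ is a minimal \emph{Hall violator} — that is, $|N(K) \setminus N(v)| = |K| - 1$, and moreover for every vertex $u \in K$, the set $K \setminus \{u\}$ \emph{does} have a matching into $R \setminus N(v)$, hence into $N(K) \setminus N(v)$ (a set of size $|K|-1$), so that matching is perfect from $K \setminus \{u\}$ onto $N(K) \setminus N(v)$. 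In particular, every vertex of $N(K) \setminus N(v)$ is matched, and from the König/Hall structure one gets that from any vertex $u \in K$ there are alternating paths reaching all of $N(K) \setminus N(v)$; concretely, $K \cup (N(K) \setminus N(v))$ behaves like a "factor-critical on the $L$-side" block.

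Next I would take an arbitrary identifiable $\ell$-subgraph $G(J) = G[J, R \setminus N(L \setminus J)]$ and suppose for contradiction that $J \cap K \neq \emptyset$. The plan is to derive a contradiction with identifiability of $G(J)$ by exhibiting a vertex $w \in J$ whose non-neighborhood \emph{within} $G(J)$ admits no matching of size $|J| - 1$. The natural candidate is to pick $w \in J \cap K$ — but one must be careful, because in $G(J)$ the available $R$-vertices are only $R \setminus N(L \setminus J)$, which is \emph{smaller} than $R$, so Hall violations can only get worse on that side; the subtlety is that $J$ itself is also smaller than $L$. The key observation I would push is this: let $K' = K \cap J$ and look at $N_{G(J)}(K' \setminus \{w\})$ for a suitably chosen $w \in K'$. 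Using the minimal-Hall-violator structure of $K$ together with the fact that deleting $L \setminus J$ removes from $R$ only vertices in $N(L\setminus J)$ — and $N(K) \setminus N(v)$ has size exactly $|K|-1$, with $K$ minimal — I want to show that $K' \setminus \{w\}$ cannot be matched into $R \setminus N(v) \setminus N(L\setminus J) \supseteq$ the non-neighborhood in $G(J)$ of a carefully chosen vertex, contradicting the matching-of-size-$(|J|-1)$ requirement for that vertex in the identifiable subgraph $G(J)$.

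The cleanest route is probably to choose $w$ to be a vertex of $J$ such that $v$ is relevant, i.e., to use $v$ as the "test vertex" when $v \in J$, and otherwise to argue that $v \notin J$ forces even more of $N(v)$ to be deleted. Precisely: if $v \in J$, then in $G(J)$ the non-neighborhood of $v$ is $(J \setminus \{v\}) \cup (R \setminus N(v) \setminus N(L\setminus J))$, and its $L$-part contains $K \cap J$, which has no matching into $R \setminus N(v)$ and hence none into the still-smaller set $R \setminus N(v) \setminus N(L\setminus J)$ — wait, this needs $K \cap J$ to still be a Hall violator after intersecting with $J$, which is where minimality of $K$ is essential: any proper subset of $K$ has a matching, so the only way $K \cap J$ fails to violate Hall is $K \cap J = K$, i.e. $K \subseteq J$; in that case $K$ itself is a Hall violator inside $N_{G(J)}(v)$'s $L$-part and we are done directly. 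If $K \not\subseteq J$ then $K \cap J$ is a proper subset of $K$, so it does have a matching into $R \setminus N(v)$; here the argument must instead use a vertex of $K \setminus J$ being deleted, which by the alternating-path structure of the minimal violator "frees up" exactly one $R$-vertex per deletion in a way that still leaves $K \cap J$ unmatchable against $R \setminus N(v) \setminus N(L \setminus J)$ — and the test vertex should be taken as one endpoint of such an alternating path inside $J$.

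The main obstacle, then, is exactly this bookkeeping: correctly handling the interaction between the two simultaneous restrictions (shrinking $L$ to $J$ and consequently shrinking $R$ to $R \setminus N(L \setminus J)$), and identifying, for whichever test vertex $w \in J$ one picks, the precise set that must be matched and showing it is a Hall violator. I expect the argument to hinge on the following quantitative fact about a minimal Hall violator $K$ for the bipartite graph $H$ with parts $K$ and $R \setminus N(v)$: for every $u \in K$ there is a matching saturating $K \setminus \{u\}$ and \emph{onto} $N_H(K)$, so deleting any vertex set $D$ from $L$ removes from $R \setminus N(v)$ only the private-to-$D$ neighbors, and since $N_H(K)$ has size $|K|-1$ while $K$ has size $|K|$, the residual graph on $(K \setminus D') \cup (N_H(K) \setminus N(D))$ still fails Hall unless $K \setminus D' = \emptyset$. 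Making that last claim precise — and choosing the test vertex so that its $G(J)$-non-neighborhood contains the relevant $L$-side and is contained in the relevant $R$-side — is the crux; the rest is routine.
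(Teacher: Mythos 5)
Your overall skeleton is the same as the paper's (take a minimal Hall violator $K$, show that $K_{\mathit{in}}:=K\cap J$ still violates Hall inside the $\ell$-subgraph $G(J)$, contradict identifiability), and your structural observations about $K$ are correct: $|N(K)\cap(R\setminus N(v))|=|K|-1$ and every $K\setminus\{u\}$ has a perfect matching onto $N(K)\cap(R\setminus N(v))$. But the proof is not closed. The step you yourself flag as ``the crux'' is exactly the step that needs to be proved, and it requires no alternating-path or K\H{o}nig structure at all: since $K_{\mathit{out}}:=K\setminus J$ is a \emph{proper} subset of $K$, it has a matching into $R\setminus N(v)$, so $|N(K_{\mathit{out}})\cap(R\setminus N(v))|\ge |K_{\mathit{out}}|$; and since $K_{\mathit{out}}\subseteq L\setminus J$, \emph{all} of $N(K_{\mathit{out}})$ (not only ``private'' neighbors, as you write) is absent from the $R$-side of $G(J)$. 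Hence the neighbors of $K_{\mathit{in}}$ in $G(J)$ lying in $R\setminus N(v)$ number at most $|N(K)\cap(R\setminus N(v))|-|N(K_{\mathit{out}})\cap(R\setminus N(v))|<|K|-|K_{\mathit{out}}|=|K_{\mathit{in}}|$, so $K_{\mathit{in}}$ cannot be matched into $R\setminus N(v)$ within $G(J)$. Your quantitative fact at the end is essentially this inequality, but it is stated with an undefined $D'$, a wrong ``private-to-$D$'' qualifier, and the inverted intuition that deletions ``free up'' $R$-vertices, and you explicitly leave it unverified; as submitted, the central Hall-violation claim is asserted rather than proved.

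The second genuine gap is the case $v\notin J$. Your only remark there (``$v\notin J$ forces even more of $N(v)$ to be deleted,'' plus choosing a test vertex as an endpoint of an alternating path) does not produce a contradiction, and the obvious choice of a test vertex inside $K_{\mathit{in}}$ does not immediately work, since the identifiability condition at such a vertex only demands a matching of the \emph{other} $|J|-1$ vertices. The paper closes this case differently: when $v\notin J$ the entire $R$-side of $G(J)$ avoids $N(v)$, and an identifiable graph has a matching saturating its left side, so $J$ (hence $K_{\mathit{in}}$) would have to be matched into $N_{G(J)}(K_{\mathit{in}})\subseteq R\setminus N(v)$, contradicting the count above; no careful choice of test vertex is needed. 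With that fact (and the clean counting argument) your case $v\in J$, where you correctly use $v$ itself as the test vertex, does go through, but without them the proposal has real holes rather than routine bookkeeping.
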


\begin{proof}
Let $K$ be a minimal subset of $L\setminus \{v\}$ that has no matching into $R \setminus N(v)$. By Hall's Theorem there must be a subset $K'\subseteq K$ with $|N(K')\cap (R\setminus N(v))|<|K'|$. Any such set $K'$ has no matching into $R\setminus N(v)$. Because $K$ is a minimal set without a matching into $R\setminus N(v)$ it follows that $|N(K)\cap (R\setminus N(v))|<|K|$.
Furthermore, every proper subset of $K$ does have a matching into $R\setminus N(v)$.

Now, fix an arbitrary set $J\subseteq L$ such that the induced $\ell$-subgraph $G'=G[J,R\setminus N(L\setminus J)]$ is identifiable.
We need to show that $J\cap K=\emptyset$.

Assume for contradiction that $K\cap J\neq \emptyset$. Let $K_{\it in}=K\cap J$ and $K_{\it out}=K\setminus J$. Because $K_{\it in}\neq\emptyset$ we have that $K_{\it out}$ is a proper, possibly empty, subset of $K$. Hence, by the first paragraph, we have that $|K_{\it out}|\leq |N(K_{\it out})\cap (R\setminus N(v))|$. In the $\ell$-subgraph $G_J$ induced by $J$, by definition, none of the neighbors of $K_{\it out}$ are present. Thus, the vertices in $K_{\it in}$ have at most those vertices as neighbors that are adjacent to $K_{\it in}$ but not to $K_{\it out}$. (Further vertices in $L\setminus (J\cup K)$ may imply that further neighbors of $K_{\it in}$ are not present, but this will not be important.)
Thus, the number of neighbors that $K_{\it in}$ has in the vertices of $R\setminus N(v)$ that are present in $G_J$ is at most
\[
|N_{G_J}(K)|-|N_{G_J}(K_{\it out})|< |K| - |K_{\it out}|=|K_{\it in}|.
\]
It follows immediately that $K_{\it in}$ has no matching into $R\setminus N(v)$ in $G_J$. If $v\in J$ then testing the identifiability condition for $v$ would require such a matching. If $v\notin J$ then using that $G_J$ must have a matching of $J$ into $N_{G_J}(J)$
means that we would need a matching of $K_{\it in}$ into $N_{G_J}(K_{\it in})\subseteq R\setminus N(v)$. Thus, either way we get a contraction. This implies that $J\cap K=\emptyset$, as claimed.
\end{proof}

Given a graph $G$ and vertex $v$ as in Lemma \ref{lemma:ptime:key} the set $K$ can be found in a straightforward way by folklore knowledge about bipartite matchings. We sketch a very simple algorithm by self-reduction for completeness.

\begin{lemma}\label{lemma:ptime:minimalset}
Given a non-identifiable graph $G=(L,R;E)$ and vertex $v\in L$
such that there is no matching of $L\setminus \{v\}$ into $R\setminus N(v)$,
a minimal set $K$ as in Lemma \ref{lemma:ptime:key} can be found in polynomial time.
\end{lemma}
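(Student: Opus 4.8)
The plan is to use the standard self-reduction paradigm for bipartite matching. First I would observe that testing whether a given subset $S\subseteq L\setminus\{v\}$ has a matching into $R\setminus N(v)$ can be done in polynomial time by running any bipartite matching algorithm on the graph $G[S, R\setminus N(v)]$ and checking whether the maximum matching saturates $S$. By hypothesis, $S=L\setminus\{v\}$ itself has no such matching, so we start from a set that works as a candidate for $K$.

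The main loop would iterate over the vertices of $L\setminus\{v\}$ in arbitrary order. We maintain a current set $K$, initialized to $L\setminus\{v\}$, which is always a subset of $L\setminus\{v\}$ with no matching into $R\setminus N(v)$. For each vertex $u$ considered, we test whether $K\setminus\{u\}$ still has no matching into $R\setminus N(v)$; if so, we replace $K$ by $K\setminus\{u\}$, and otherwise we keep $u$ in $K$. After processing all vertices of $L\setminus\{v\}$ once, the resulting set $K$ is inclusion-wise minimal: every $u\in K$ was tested at the moment it was (re)considered, and removing it was rejected because at that point $K\setminus\{u\}$ had a matching; since we only ever shrink $K$ afterwards, and a set with a matching into a fixed set keeps having one after further deletions from $R$'s side is not affected here (we only delete from the $L$-side), $K\setminus\{u\}$ still has a matching into $R\setminus N(v)$ for the final $K$. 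Hence no single vertex can be removed from the final $K$ while preserving the "no matching" property, which is exactly inclusion-wise minimality as used in Lemma~\ref{lemma:ptime:key}. Nonemptiness and existence follow from Lemma~\ref{lemma:ptime:key} itself, or simply from the fact that $L\setminus\{v\}$ is a valid starting set and $K=\emptyset$ always has a (trivial) matching, so the loop never empties $K$.

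For the running time: the outer loop runs $|L|-1$ times, and each iteration performs one bipartite matching computation on a subgraph of $G$, which takes time polynomial in $|V(G)|+|E(G)|$; thus the whole procedure is polynomial. One can save a factor by maintaining the current matching and augmenting incrementally, but for the statement as given a naive recomputation suffices.

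The only subtle point — and the one I would state carefully rather than gloss over — is the monotonicity argument ensuring that the greedily obtained $K$ is genuinely minimal: we need that if a set $S$ has a matching into $R\setminus N(v)$, then so does every subset of $S$. This is immediate, since a matching saturating $S$ restricts to one saturating any $S'\subseteq S$. Combined with the fact that the algorithm only removes vertices from $K$, this guarantees that the rejection recorded for each surviving vertex $u$ remains valid for the final set. So there is no real obstacle here; the lemma is essentially folklore and the proof is just a clean write-up of the self-reduction.
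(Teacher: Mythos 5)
Your proposal is correct and follows essentially the same self-reduction as the paper: initialize $K:=L\setminus\{v\}$, repeatedly test whether dropping a vertex preserves the absence of a matching into $R\setminus N(v)$, and use the trivial monotonicity of matchability under taking subsets (together with the invariant that $K$ never has a matching, hence never becomes empty) to conclude minimality and nonemptiness. The only difference is cosmetic -- you argue a single pass suffices, while the paper restarts the scan after each removal -- so there is nothing further to add.
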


\begin{proof}
Set $K:=L\setminus \{v\}$ and repeat the following routine: Try each vertex $w\in K$ and test whether there is a
matching of $K\setminus \{w\}$ into $R\setminus N(v)$. If there is then try the next vertex. If not then update $K:=K\setminus \{w\}$ and repeat. Output the current set $K$ if each $K\setminus \{w\}$ has a matching of $K\setminus \{w\}$ into $R\setminus N(v)$.

As an invariant, the set $K$ never has a matching into $R\setminus N(v)$. In particular, we can never reach an empty set (and we can only reach a singleton vertex if it is isolated). Thus, the algorithm must terminate with a nonempty set $K$ such that each set $K\setminus \{w\}$ has a matching into $R\setminus N(v)$. This also means that all smaller subsets of $K$ have matchings into $R\setminus N(v)$. Thus, $K$ is a minimal set with no matching into $R\setminus N(v)$.
\end{proof}

We know now that if $G=(L,R;E)$ is not identifiable then we can efficiently find a subset $K\subseteq L$ such that no vertex of $K$ is contained in any identifiable $\ell$-subgraph of $G$. We now prove formally that we may safely delete $K$ and $N(K)$ from $G$ while still retaining the same set of identifiable $\ell$-subgraphs.

\begin{lemma}\label{lemma:lsubgraph:deletion}
Let $G=(L,R;E)$ a bipartite graph and let $K\subseteq L$. Every $\ell$-subgraph of $G$ that contains no vertex of $K$ is also an $\ell$-subgraph of
the $\ell$-subgraph of $G$ induced by $L\setminus K$,
and vice versa.
Moreover, these $\ell$-subgraphs are induced by the same sets $J\subseteq L\setminus K$.
\end{lemma}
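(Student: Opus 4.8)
The plan is to prove a single identity that subsumes both directions of the claim together with its last sentence. Write $H := G(L\setminus K)$ for the $\ell$-subgraph of $G$ induced by $L\setminus K$; by definition its right side is $R_H := R\setminus N(K)$, so $H = G[L\setminus K, R_H]$. Note that for any $\ell$-subgraph $G(J)$ of $G$ the left side is exactly $J$, so $J$ is uniquely determined by $G(J)$, and $G(J)$ contains no vertex of $K$ precisely when $J\cap K=\emptyset$, i.e. $J\subseteq L\setminus K$. Since the $\ell$-subgraphs of $H$ are exactly the graphs $H(J)$ for $J\subseteq L\setminus K$, it therefore suffices to prove that
\[
G(J) = H(J) \qquad\text{for every } J\subseteq L\setminus K,
\]
where $H(J)$ denotes the $\ell$-subgraph of $H$ induced by $J$; this immediately gives that the $\ell$-subgraphs of $G$ avoiding $K$ coincide with the $\ell$-subgraphs of $H$ and are induced by the same sets.

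To prove this identity, I would unwind both sides. Since $H$ is an induced subgraph of $G$ and $G$ is bipartite, for any $S\subseteq L\setminus K$ all $G$-neighbors of $S$ lie in $R$, so $N_H(S) = N(S)\cap R_H = N(S)\setminus N(K)$. By definition $H(J) = H[J,\, R_H\setminus N_H((L\setminus K)\setminus J)]$ and $G(J) = G[J,\, R\setminus N(L\setminus J)]$, and both are induced subgraphs of $G$; hence it is enough to check that their right-hand vertex sets agree, that is,
\[
R_H\setminus N_H\bigl((L\setminus K)\setminus J\bigr) = R\setminus N(L\setminus J).
\]

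The one place where the hypothesis $J\cap K=\emptyset$ is used is the following: because $K\subseteq L\setminus J$, the set $L\setminus J$ is the disjoint union of $K$ and $(L\setminus K)\setminus J$, and since neighborhoods of subsets of $L$ are additive over unions in a bipartite graph,
\[
N(L\setminus J) = N(K)\cup N\bigl((L\setminus K)\setminus J\bigr).
\]
Plugging $R_H = R\setminus N(K)$ and $N_H((L\setminus K)\setminus J) = N((L\setminus K)\setminus J)\setminus N(K)$ into the left-hand side of the displayed equation yields
\[
R_H\setminus N_H\bigl((L\setminus K)\setminus J\bigr) = R\setminus\Bigl(N(K)\cup N\bigl((L\setminus K)\setminus J\bigr)\Bigr) = R\setminus N(L\setminus J),
\]
as required, so $G(J) = H(J)$. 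I do not anticipate a real obstacle: the argument is a routine set-theoretic manipulation, and the only subtlety is being careful to translate the neighborhood $N_H(\cdot)$ taken inside the induced subgraph $H$ back into a neighborhood in $G$, and to invoke $J\cap K=\emptyset$ exactly once, at the step splitting $L\setminus J$.
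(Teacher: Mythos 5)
Your proposal is correct and follows essentially the same route as the paper: both fix an arbitrary $J\subseteq L\setminus K$, reduce the claim to the identity $G(J)=H(J)$ where $H=G[L\setminus K, R\setminus N(K)]$, and verify it by translating $N_H(\cdot)$ back into $G$-neighborhoods and using the disjoint decomposition $L\setminus J=K\cup\bigl((L\setminus K)\setminus J\bigr)$ so that the two right-hand vertex sets both equal $R\setminus N(L\setminus J)$. No gaps; your write-up just makes the single use of $J\cap K=\emptyset$ slightly more explicit than the paper's.
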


\begin{proof}
Every $\ell$-subgraph of a graph is defined by the left part of its bipartition. We show that taking the induced $\ell$-subgraph for any $J\subseteq L\setminus K$ gives the same graph from both $G$ and $G-N[K]=G[L\setminus K,R\setminus N(K)]$. Fix an arbitrary set $J\subseteq L\setminus K$.

Clearly, since $J$ is a subset of the left part of the bipartition in both graphs, we get $\ell$-subgraphs of the form $H_1=G[J,R_1]$ and $H_2=(G-N[K])[J,R_2]$. The latter is also an induced subgraph of $G$ so it simplifies to $H_2=G[J,R_2]$. It suffices to prove that $R_1=R_2$.

By definition of $\ell$-subgraph we have $R_1=R\setminus N_G(L\setminus J)$.
Similarly, for the $\ell$-subgraph of $J$ in $G'=G-N_G[K]=G[L\setminus K,R\setminus N_G(K)]$ we get
\[
R_2=(R\setminus N_G(K))\setminus N_{G'}((L\setminus K)\setminus J)\,.
\]
We can safely replace $N_{G'}((L\setminus K)\setminus J)$ by $N_G((L\setminus K)\setminus J)$ because $G'$ is an induced subgraph of $G$ so the neighborhood is only affected by restriction to $R\setminus N_G(K)$, the right part of the bipartition of $G'$. Thus, in $R_2$ we have the vertices of $R$ that do not have a neighbor in $K$ and that do not have a neighbor in $(L\setminus K)\setminus J$. Because $K$ and $J$ are disjoint subsets of $L$ this is the same as taking out the neighbors of $L\setminus J$ from $R$, i.e., taking $R\setminus N(L\setminus J)=R_1$. Thus, both graphs are induced subgraphs of $G$ with left part $J$ and right part $R\setminus N(L\setminus J)$, so they are identical as claimed.
\end{proof}

In particular, the lemma implies that if no identifiable $\ell$-subgraph contains a vertex of a nonempty set $K\subseteq L$ then $G$ and $G-N[K]$ contain the same identifiable $\ell$-subgraphs. Thus, when seeking identifiable $\ell$-subgraphs it is safe to eliminate $N[K]$ for sets $K$ obtained via Lemma~\ref{lemma:ptime:key}.

Now we can put together the claimed polynomial time algorithm.

\begin{theorem}\label{theorem:ids:ptime}
The \IDS problem can be solved in polynomial time.
\end{theorem}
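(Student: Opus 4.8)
The plan is to turn Lemmas~\ref{lemma:ptime:key}--\ref{lemma:lsubgraph:deletion} into a simple iterative algorithm. Given the input graph $G = (L,R;E)$, we maintain a current graph, initially $G$, and repeat the following. If the current graph has no edge, output \textsc{no} (an edgeless graph has only edgeless $\ell$-subgraphs, and an identifiable graph has, by definition, at least one edge). If the current graph is identifiable, output \textsc{yes} (it is its own $\ell$-subgraph, induced by the whole of its left part). Otherwise the current graph $G' = (L',R';E')$ has an edge but is not identifiable, so by definition of identifiability there is a vertex $v\in L'$ for which the subgraph induced by the non-neighbours of $v$ — which is precisely $G'[L'\setminus\{v\},\,R'\setminus N_{G'}(v)]$ — has no matching of size $|L'|-1$; such a matching would have to saturate $L'\setminus\{v\}$, so $v$ is exactly a vertex as in Lemma~\ref{lemma:ptime:key}, and it can be found by trying all $v\in L'$ and performing one bipartite matching computation for each. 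We then use Lemma~\ref{lemma:ptime:minimalset} to compute in polynomial time an inclusion-wise minimal $K\subseteq L'\setminus\{v\}$ with no matching into $R'\setminus N_{G'}(v)$, replace the current graph by $G' - N_{G'}[K]$, and iterate.

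Correctness follows one iteration at a time. By Lemma~\ref{lemma:ptime:key}, the set $K$ is nonempty and no identifiable $\ell$-subgraph of $G'$ contains a vertex of $K$; hence the identifiable $\ell$-subgraphs of $G'$ are exactly those identifiable $\ell$-subgraphs of $G'$ that avoid $K$, which by Lemma~\ref{lemma:lsubgraph:deletion} are exactly the identifiable $\ell$-subgraphs of $G' - N_{G'}[K]$ (induced by the same subsets of $L'\setminus K$, using that identifiability is a property of the graph alone). Therefore $G'$ has an identifiable $\ell$-subgraph if and only if $G' - N_{G'}[K]$ does, so replacing the current graph by $G' - N_{G'}[K]$ preserves the answer. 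Together with the two (obviously correct) base cases, this shows that whatever the algorithm outputs is correct for the original graph $G$.

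It remains to bound the running time. Each iteration either halts or strictly decreases the number of vertices in the left part, since $K\neq\emptyset$ and $K\subseteq L'$ give $|N_{G'}[K]\cap L'|\ge |K|\ge 1$; hence there are at most $|L|$ iterations. Within an iteration, testing whether the current graph is identifiable, finding a witness $v$, and running the self-reduction of Lemma~\ref{lemma:ptime:minimalset} each amount to $O(|L|)$ bipartite matching computations, each of which is polynomial. Altogether the algorithm runs in polynomial time, proving the theorem.

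I do not anticipate a genuine obstacle beyond bookkeeping: the three lemmas already carry the conceptual load. The one point that needs care is the interaction of the base cases with the definition of identifiability — namely, that the edgeless graph must be treated as a \textsc{no}-instance, and that ``$G'$ has an edge and is not identifiable'' really does produce a vertex $v$ of the exact form required by Lemma~\ref{lemma:ptime:key} rather than merely some vertex violating the matching condition in a different way. Both are immediate once the definitions are unwound, since for $v\in L'$ the non-neighbours of $v$ are precisely $(L'\setminus\{v\})\cup(R'\setminus N_{G'}(v))$ and a matching of size $|L'|-1$ among them necessarily saturates $L'\setminus\{v\}$ into $R'\setminus N_{G'}(v)$.
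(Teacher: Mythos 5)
Your proof is correct and follows essentially the same route as the paper: iteratively test identifiability, find a violating vertex $v$, extract a minimal unmatched set $K$ via Lemma~\ref{lemma:ptime:minimalset}, and delete $N[K]$ using Lemmas~\ref{lemma:ptime:key} and~\ref{lemma:lsubgraph:deletion} to preserve the answer, with at most $|L|$ iterations. Your explicit remarks on the edgeless base case and on why a matching of size $|L'|-1$ among the non-neighbours of $v$ must saturate $L'\setminus\{v\}$ are correct and only make explicit what the paper leaves implicit.
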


\begin{proof}
The algorithm works as follows. Given an input graph $G=(L,R;E)$ it
first tests if $E = \emptyset$.
If $E = \emptyset$, then $G$ is not identifiable and has no identifiable $\ell$-subgraph; the algorithm reports this fact and halts.
If $E\neq\emptyset$, the algorithm proceeds iteratively.
Identifiability can be efficiently tested by $|L|$ bipartite matching computations.
If $G$ is identifiable then graph $G$ is output as an identifiable $\ell$-subgraph.
If $G$ is not identifiable then the algorithm picks an arbitrary $v$ such that there is no matching of $L\setminus \{v\}$ into $R\setminus N(v)$. By Lemma \ref{lemma:ptime:key} there is a nonempty set $K\subseteq L$ such that no identifiable $\ell$-subgraph of $G$ contains a vertex of $K$; such a set can be found efficiently by Lemma~\ref{lemma:ptime:minimalset}. Thus, if $G$ has any identifiable $\ell$-subgraph then every such subgraph must avoid $K$ and, hence, it is also an $\ell$-subgraph of $G-N[K]$ by Lemma~\ref{lemma:lsubgraph:deletion}. Conversely, $G-N[K]$ contains no further identifiable $\ell$-subgraphs. The algorithm thus replaces $G$ by $G-N[K]$ and starts over. In case a graph is output, Lemma~\ref{lemma:lsubgraph:deletion} implies that the output graph is also an $\ell$-subgraph of the initial graph $G$.
\end{proof}

In fact, it can be easily seen that the algorithm always returns a maximum identifiable $\ell$-subgraph and thus also solves the maximization variant
of the problem, \MAXIDS, in polynomial time. (The proof also shows that this graph is unique.)

\begin{corollary}\label{corollary:max-ids:ptime}
The \MAXIDS problem can be solved in polynomial time.
\end{corollary}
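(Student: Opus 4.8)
The plan is to observe that the algorithm from the proof of Theorem~\ref{theorem:ids:ptime} already solves \MAXIDS: whenever it outputs a graph, that graph is an identifiable $\ell$-subgraph of $G$ induced by a largest possible subset of $L$ (and is in fact the unique such subgraph). Since the algorithm runs in polynomial time, the corollary then follows.

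First I would record the invariant maintained through the iterations. Let $G_0 = G, G_1, \dots, G_t$ be the graphs produced by the algorithm, where $G_{i+1} = G_i - N[K_i]$ for the nonempty set $K_i$ obtained from Lemma~\ref{lemma:ptime:key} applied to $G_i$, and write $L_i$ for the left part of $G_i$, so that $L_{i+1} = L_i \setminus K_i$. By Lemma~\ref{lemma:ptime:key}, no identifiable $\ell$-subgraph of $G_i$ contains a vertex of $K_i$; hence, by Lemma~\ref{lemma:lsubgraph:deletion}, $G_i$ and $G_{i+1}$ have exactly the same identifiable $\ell$-subgraphs, induced by the same subsets of $L_{i+1}$. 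Iterating, the identifiable $\ell$-subgraphs of $G$ are precisely those of $G_t$, each induced by a subset of $L_t$; moreover, applying the ``moreover'' part of Lemma~\ref{lemma:lsubgraph:deletion} repeatedly to $K_0, \dots, K_{t-1}$, for every $J \subseteq L_t$ the $\ell$-subgraph of $G$ induced by $J$ coincides with the $\ell$-subgraph of $G_t$ induced by $J$.

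Now suppose the algorithm halts by outputting a graph; this is $G_t$, which is identifiable. Taking $J = L_t$ in the last sentence above gives $G(L_t) = G_t$, so $G_t$ is itself an identifiable $\ell$-subgraph of $G$, induced by $L_t$. On the other hand, every identifiable $\ell$-subgraph $G(J)$ of $G$ has $J \subseteq L_t$ by the invariant, and hence $|J| \le |L_t|$. Thus $G_t = G(L_t)$ is an identifiable $\ell$-subgraph of $G$ whose inducing set has maximum size, and it is the only identifiable $\ell$-subgraph whose inducing set has that size, so it is the unique maximum identifiable $\ell$-subgraph. If instead the algorithm halts without output, then $G$ has no identifiable $\ell$-subgraph at all.

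Finally I would read off the answer to the decision problem \MAXIDS on input $(G,k)$: run the algorithm; if it outputs $G_t$, answer ``yes'' if and only if $|L_t| \ge k$; if it produces no output, answer ``no''. I do not expect a genuine obstacle here; the one point that must be handled carefully is the inductive argument that all deletions taken together preserve the family of identifiable $\ell$-subgraphs of $G$ along with their inducing sets, which is exactly what the combination of Lemmas~\ref{lemma:ptime:key} and~\ref{lemma:lsubgraph:deletion} provides.
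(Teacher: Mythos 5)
Your proposal is correct and follows essentially the same route as the paper: it reuses the iterative algorithm of Theorem~\ref{theorem:ids:ptime}, invoking Lemma~\ref{lemma:ptime:key} and Lemma~\ref{lemma:lsubgraph:deletion} to argue that each deletion step preserves all identifiable $\ell$-subgraphs together with their inducing sets, so the output graph is a (in fact the unique) maximum identifiable $\ell$-subgraph. The only difference is presentational: you spell out the iteration invariant explicitly, whereas the paper states the same argument more tersely.
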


\begin{proof}
Clearly, if the input graph is identifiable then returning it is optimal.
If not then
either $E = \emptyset$ (in which case $G$ has no identifiable $\ell$-subgraph) or
there is a vertex $v$ such that $L\setminus \{v\}$ cannot be matched into $R\setminus N(v)$ and, by Lemma \ref{lemma:ptime:key} the algorithm finds a nonempty
set $K$ that is avoided by all identifiable $\ell$-subgraphs. Since $G$ and $G-N[K]$ have the same $\ell$-subgraphs induced by $J\subseteq L\setminus K$, in particular, any maximum identifiable $\ell$-subgraph of $G$ is also an identifiable $\ell$-subgraph of $G-N[K]$. Thus, continuing the iterative approach on $G-N[K]$ will find a maximum solution, if one exists.
\end{proof}


\section{Parameterized complexity of Min-Identifiable Subgraph}\label{sec:parameterized}

In this section we study the parameterized complexity of the \MINIDS problem, which was proved \NP-hard in a previous work~\cite{DAM1}. We consider the following parameterized variants \MINIDSk and \MINIDSnk.

\medskip
\begin{center}
\fbox{\parbox{0.95\linewidth}{\noindent
{\MINIDSk}
\\[.8ex]
\begin{tabular*}{0.9\textwidth}{rl}
\emph{Instance:} & A bipartite graph $G = (L,R;E)$ and an integer $k$.\\
\emph{Parameter:} & $k$.\\
\emph{Question:} & Does $G$ have an identifiable $\ell$-subgraph induced by a set $J$ with $|J|\le k$?
\end{tabular*}
}}
\end{center}
\begin{center}
\fbox{\parbox{0.95\linewidth}{\noindent
{\MINIDSnk}
\\[.8ex]
\begin{tabular*}{0.9\textwidth}{rl}
\emph{Instance:} & A bipartite graph $G = (L,R;E)$ and an integer $k$.\\
\emph{Parameter:} & $|L|-k$.\\
\emph{Question:} & Does $G$ have an identifiable $\ell$-subgraph induced by a set $J$ with $|J|\le k$?
\end{tabular*}
}}
\end{center}
\medskip
The two problems differ only in the choice of parameter; the \MINIDSnk problem can be reformulated as the problem of finding a set $L_0\subseteq L$ of size at least $k$ such that $G-N[L_0]$ is identifiable. We show that both parameterizations are \W{1}-hard, i.e., they are not fixed-parameter tractable unless $\FPT=\W{1}$, which is deemed unlikely. For both problems we give parameterized reductions from the well-known \W{1}-hard \MCCk problem, defined as follows.
\medskip
\begin{center}
\fbox{\parbox{0.95\linewidth}{\noindent
{\MCCk}
\\[.8ex]
\begin{tabular*}{.9\textwidth}{rl}
\emph{Instance:} & A graph $G = (V,E)$, an integer $k$, and a function $\phi\colon V\to\{1,\ldots,k\}$.\\
\emph{Parameter:} & $k$.\\
\emph{Question:} & Does $G$ contain a clique $C$ with $\phi(C)=\{1,\ldots,k\}$?
\end{tabular*}
}}
\end{center}
\medskip

\begin{theorem}
\MINIDSk is \W{1}-hard.
\end{theorem}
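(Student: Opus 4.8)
The plan is to give a parameterized reduction from \MCCk to \MINIDSk, so that an instance $(G,k,\phi)$ of \MCCk with $G = (V,E)$ maps to an instance $(G',k')$ of \MINIDSk where $k'$ is bounded by a function of $k$ only. The natural first step is to recall from \cite{DAM1,ALGO} the combinatorial characterization of identifiability, e.g.\ a convenient sufficient/necessary condition for a small bipartite graph to be identifiable, so that we have a concrete "gadget" whose identifiability we can control. I would build the bipartite graph $G'=(L,R;E')$ so that $L$ has a few distinguished vertices together with one vertex per element of $V(G)$, grouped by the color classes $\phi^{-1}(1),\dots,\phi^{-1}(k)$, and so that the only way to delete enough of $L$ (together with all neighbors) and leave an identifiable $\ell$-subgraph of size at most $k'$ is to keep exactly one vertex from each color class in a way that forces those $k$ chosen vertices to be pairwise adjacent in $G$.

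The key design idea I would pursue: encode edges of $G$ into $R$, with selector/edge gadgets arranged so that a set $J\subseteq L$ of the prescribed size yields an identifiable $\ell$-subgraph if and only if $J$ picks out a multicolored clique. Concretely, for each pair of color classes $\{i,j\}$ I would add right-side gadget vertices whose neighborhoods are tuned so that the $\ell$-subgraph induced by $J$ contains a matching of the right size precisely when, for every pair of kept vertices $u\in\phi^{-1}(i)$ and $w\in\phi^{-1}(j)$, the edge $uw$ is present in $G$; a missing edge should destroy the matching needed to certify identifiability for some vertex of $L$, using Hall's condition exactly as in Lemma~\ref{lemma:ptime:key}. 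The target value $k'$ is then $k$ plus the (constant or $k$-bounded) number of auxiliary $L$-vertices that any identifiable $\ell$-subgraph is forced to contain, so $k'$ depends only on $k$, as required for a parameterized reduction; and $G'$ is clearly constructible in polynomial time.

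The two directions of correctness are then: (forward) from a multicolored clique $C=\{v_1,\dots,v_k\}$ with $\phi(v_t)=t$, define $J$ to keep the $v_t$'s and the mandatory auxiliary vertices, delete the rest of $L$, and verify that $G'(J)$ is identifiable --- this means exhibiting, for every $v\in J$, an explicit matching of $J\setminus\{v\}$ into the surviving right vertices, which the gadget should make routine because $C$ being a clique guarantees all required edge-gadget vertices survive; (backward) from any identifiable $\ell$-subgraph $G'(J)$ with $|J|\le k'$, argue that $J$ must contain all mandatory auxiliary $L$-vertices (otherwise identifiability fails immediately) and at most one vertex per color class (by the size bound, after accounting for the auxiliary vertices), then at least one per class (otherwise a Hall-type deficiency kills identifiability), and finally that the chosen vertices are pairwise adjacent in $G$ (again by a Hall-type deficiency argument when an edge is missing).

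The main obstacle I expect is the backward direction, specifically enforcing "exactly one per color class \emph{and} pairwise adjacency" simultaneously with a \emph{parameter-bounded} target size $k'$: the challenge is to design the right-side neighborhoods tightly enough that the matching condition of identifiability genuinely detects a missing edge, without the deletion of "too many" left vertices giving a cheap way to trivially satisfy identifiability on a tiny $\ell$-subgraph. I would handle this by adding a small set of "anchor" vertices in $L$ with neighbors chosen so that no nonempty identifiable $\ell$-subgraph can avoid them (so the $\ell$-subgraph cannot shrink below a controlled floor), thereby pinning down which vertices of $L$ any size-$\le k'$ identifiable $\ell$-subgraph must use and reducing the analysis to the clique-selection logic. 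Getting the interaction between these anchors and the edge gadgets exactly right --- so that Hall's condition is tight in both directions --- is the delicate part of the proof.
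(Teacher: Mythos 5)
Your plan follows the same overall route as the paper's proof: a parameterized reduction from \MCCk in which $L$ consists of the vertices of $V$ (grouped by color classes) together with a few special left vertices, $R$ encodes the edges of $G$, the budget $k'$ is a function of $k$ alone, and the backward direction forces ``all special vertices, exactly one vertex per color class, pairwise adjacent.'' However, what you have written is a design brief, not a proof: the construction itself is never specified, and it is precisely the part you defer (``getting the interaction between these anchors and the edge gadgets exactly right'') that carries all the mathematical content. Concretely, you never say what the neighborhoods of the special vertices are, what $k'$ is, or where the matching capacity comes from that lets every non-deleted left vertex be matched when one vertex is removed. In the paper's construction these choices are the whole trick: $k'=2k$; each edge $\{u,v\}$ with $u\in V_i$, $v\in V_j$ becomes a single right vertex adjacent only to $u$ and $v$ (plus special vertices); each $v\in V_i$ gets $k$ private right vertices in a set $F_i$; and, crucially, the special vertex $t_i$ is adjacent to \emph{all} of $F_j$ for $j\neq i$ and to \emph{all} of $E_{ab}$ for $i\notin\{a,b\}$ --- i.e., its non-neighborhood is exactly $F_i$ and the $E$-sets involving color $i$. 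This complementary adjacency is what makes the identifiability test for $t_i$ force each selected $v_j$ ($j\neq i$) to be matched to a vertex representing an edge between $v_i$ and $v_j$, so a missing edge of $G$ is detected; the private vertices in $F_i$ absorb $v_i$ and the remaining $t_\ell$'s. Your ``anchor'' idea (vertices no identifiable $\ell$-subgraph can avoid) gestures at the role of the $t_i$'s, but you give no mechanism achieving it, and in fact the paper does not make the $t_i$'s unavoidable by fiat: it proves, by a two-stage case analysis (at least two $t$'s missing, exactly one missing), that any solution of size at most $2k$ must contain all of them, using that every right vertex is adjacent to at least one $V$-vertex and at least one special vertex.

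So the gap is genuine: without a concrete gadget and the accompanying verification (forward direction: explicit matchings for each $L'\setminus\{t_i\}$ and $L'\setminus\{v_i\}$, which needs $k\geq 3$ and the counting of available $F$-vertices; backward direction: ruling out missing special vertices, empty color classes, two vertices in one class within the budget $2k$, and finally a missing clique edge), the statement is not established. Invoking ``Hall-type deficiency'' in the abstract is not enough --- the deficiencies must be engineered into the neighborhoods, and that engineering is exactly what your proposal leaves open.
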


\begin{proof}
We give a parameterized reduction from \MCCk to \MINIDSk. Let $(G=(V,E),\phi,k)$ be an instance of \MCCk.
Without loss of generality assume that $k\geq 3$ or else solve the instance in polynomial time (finding a clique of size $k\in\{1,2\}$).
Let $V_i:=\phi^{-1}(i)$ for $i\in\{1,\ldots,k\}$. We will construct a bipartite graph $G'=(L,R;E')$ such that $(G,k,\phi)$ is yes for \MCCk if and only if $(G',k')$ is yes for \MINIDSk.

\medskip

\emph{Construction.}
We create an instance of \MINIDSk with bipartite graph $G'=(L,R;E')$ and parameter $k'=2k$.
The set $L$ consists of the vertices in $V$ along with $k$ special vertices $t_1,\ldots,t_k$. We now describe the set $R$ along with the adjacencies between $L$ and $R$:
\begin{itemize}
 \item For each choice of $1\leq i<j\leq k$ we create a set $E_{ij}$ of vertices, which is then added to $R$.
  \begin{itemize}
   \item For each edge $\{u,v\}$ in $G$ with $u\in V_i$ and $v\in V_j$ we add a vertex to $E_{ij}$ and make it adjacent to $u$ and $v$ in $G'$. (We could also achieve this by starting with $G$, dropping the (irrelevant) edges between vertices of the same set $V_i$, and then subdividing every edge.)
   \item Make the special vertex $t_i$ adjacent to all vertices of $E_{ab}$ with $1\leq a<b\leq k$ and $i\notin \{a,b\}$.
  \end{itemize}
 \item For each $1\leq i\leq k$ create a set $F_i$ of $k\cdot |V_i|$ vertices and add it to $R$.
 \begin{itemize}
  \item Make each vertex $v\in V_i$ adjacent to $k$ private vertices in $F_i$. No other vertices of $V$ will be adjacent to these vertices.
  \item Make each special vertex $t_j$ adjacent to all vertices of $F_i$ with $i\neq j$.
 \end{itemize}
\end{itemize}
This completes the construction of $G'=(L,R;E')$. It can be helpful to keep in mind that vertices in $V_i$ are only adjacent to (some) vertices in $F_i$ or in $E_{ab}$ with $i\in\{a,b\}$, whereas each special vertex $t_i$ is adjacent to all vertices in $F_j$ with $i\neq j$ and all vertices in $E_{ab}$ with $i\notin\{a,b\}$. An example construction is shown in Fig.~\ref{fig:reduction}.

\begin{figure}[h!]
  \centering
   \includegraphics[width=160mm]{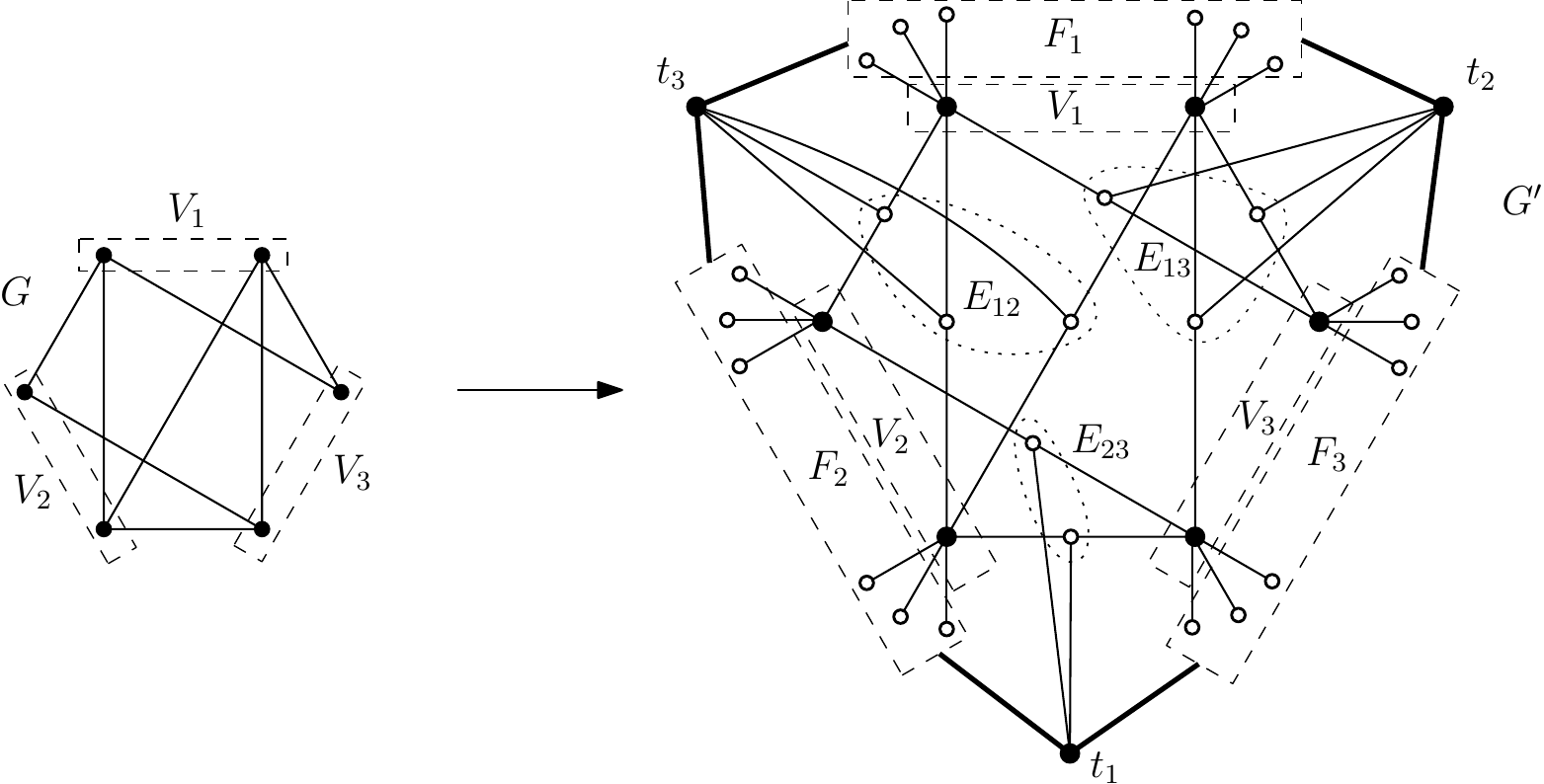}
\caption{An example construction of the bipartite graph $G' = (L,R;E')$ from an input $(G,k,\phi)$ to \MCCk.
  In the example $k = 3$, vertices in $L$ and $R$ are colored black and white, respectively, and a thick edge between vertex $t_i$ and a set $F_j$ means that $t_i$ is adjacent to all vertices in $F_j$.}\label{fig:reduction}
\end{figure}

Clearly, the construction can be performed in polynomial time. It remains to prove correctness, that is, that
$G$ has a $k$-clique containing exactly one vertex of each set $V_i$ if and only if $(G',k')$ is a yes instance of \MINIDSk for $k'=2k$.

\medskip

\emph{Correctness.}
Assume first that $G$ contains a $k$-clique $C$ with exactly one vertex from each set $V_i$ and let $\{v_i\}=C\cap V_i$. We claim that $L'=\{v_1,\ldots,v_k,t_1,\ldots,t_k\}$ induces an identifiable $\ell$-subgraph. Let $R'\subseteq R$ denote the vertices in $R$ of the induced $\ell$-subgraph, i.e., the vertices that have no neighbor among $L\setminus L'$.

Let us first check that there is a matching of $L'\setminus\{t_i\}$ into $R'\setminus N(t_i)$: The vertex $t_i$ is not adjacent to $F_i$ nor to sets $E_{ab}$ with $i\notin\{a,b\}$. There are $k$ vertices in $F_i$ that are adjacent to $t_1,\ldots,t_{i-1},t_{i+1},\ldots,t_k$ as well as the vertex $v_i$. These are contained in $R'$ since all their neighbors are in $L'$. We can match the mentioned vertices of $L'$ to them. Because $C$ is a clique there are edges from $v_i$ to each other vertex of the clique; these give rise to vertices in $E_{1i},\ldots,E_{i-1,i}, E_{i,i+1}, \ldots,E_{i,k}$ corresponding to these edges that have no other neighbors in $V\subseteq L$ (in $G'$). Because $t_i$ is not adjacent to such sets $E_{ab}$ all these vertices are present and each $v_1,\ldots,v_{i-1},v_{i+1},\ldots,v_k$ can be matched to the vertex representing its edge to $v_i$.

Let us now check that there is a matching of $L'\setminus\{v_i\}$ into $R'\setminus N(v_i)$: For each $v_j$ with $j\neq i$ all its neighbors in $F_j$ are present since they have no other neighbor in $V$ and all $t_1,\ldots,t_k$ are in the $\ell$-subgraph. Thus, each $v_j$ can be matched to such a neighbor. Because $k\geq 3$ there are at least two sets $F_j$ and $F_{j'}$ with $i\notin \{j,j'\}$ and a total of $2k-2$ vertices therein are not yet matched to. Thus, all vertices $t_1,\ldots,t_k$ can be matched to these vertices. (E.g., all but $t_j$ to vertices of $F_j$ and $t_j$ to a vertex of $F_{j'}$.)

Thus, the $\ell$-subgraph induced by $L'$ is indeed identifiable. Since $|L'|=2k$ this implies that $(G',2k)$
is a yes instance of \MINIDSk.

Now assume that $(G',2k)$ is yes for \MINIDSk. Let $L'\subseteq L$ be a set of size at most $2k$ such that the $\ell$-subgraph of $G'$ induced by $L'$, namely $G'[L',R']$ with $R'=R\setminus N(L\setminus L')$, is identifiable. Our goal is to show that it includes all special vertices $t_1,\ldots,t_k$ along with one vertex per set $V_i$ and that the latter vertices form a $k$-clique in $G$.

We first observe that $L'\cap V\neq\emptyset$ and $L'\cap \{t_1,\ldots,t_k\}\neq \emptyset$ is required: Excluding either type of vertex implies $R'=\emptyset$ since each vertex in $R$ is adjacent to at least one vertex of $V$ and at least one vertex $t_i$.

Assume for contradiction that at least two special vertices, say $t_i$ and $t_j$ with $i\neq j$, are not in $L'$. Thus, picking a third vertex $t_\ell\notin \{t_i,t_j\}$ we need a matching of $L'\setminus \{t_i,t_k,t_\ell\}$ into the vertices of $R$ that are not neighbors of (at least) $t_i$, $t_j$, and $t_\ell$, but no such vertices exist: There is no $F_a$ with $a= i$ and $a=j$ and there is no $E_{ab}$ with $i\in\{a,b\}$, $j\in\{a,b\}$, $\ell\in\{a,b\}$. Since we have at least one vertex $v\in L'\cap V$ we can observe that this vertex cannot be matched; a contradiction.

Now assume for contradiction that exactly one special vertex, say $t_1$, is not contained in $L'$. This requires a matching of $L'\setminus\{t_2\}$ into $R'\setminus N(t_2)$. Since both $t_1,t_2\notin L'\setminus \{t_2\}$, no vertex of a set $E_{ab}$ may exist in $R'\setminus N(t_2)$ except possibly for vertices of $E_{12}$. Thus, no vertex of $V_3,\ldots,V_k$ may be in $L'$ since they would have no neighbors to match to. This in turn implies that no other set $E_{ab}$ except for $E_{12}$ has any vertices in $R'$. We complete the contradiction by considering the requirement of a matching of $L'\setminus \{t_3\}$ into $R'\setminus N(t_3)$: Now, the vertices of $E_{12}$ are not available since they are adjacent to $t_3$. Thus, there are no $E_{ab}$ vertices to match to. Similarly, absence of $t_1$ and $t_3$ eliminates all vertices of sets $F_a$. Since $L'\setminus\{t_3\}$ must contain at least one vertex of $V$, we find that such a vertex cannot be matched into $R'\setminus N(t_3)$; a contradiction.

We now have the remaining case that $\{t_1,\ldots,t_k\}\subseteq L'$.
We also know already that at least one vertex of $V$ must be contained in $L'$, say $V_1\cap L'\neq \emptyset$ and pick $v_1\in V_1\cap L'$. Assume for contradiction that some set $V_i$ with $i\neq 1$ has an empty intersection with $L'$. It follows that in $R'$ there are no vertices of sets $E_{ab}$ with $i\in\{a,b\}$ since each such vertex is adjacent to some vertex in $V_i$. We now consider the requirement of a matching of $L'\setminus\{t_i\}$ into $R'\setminus N(t_i)$ to complete the contradiction: This additionally ensures that there are no vertices of $F_1$ left in $R'\setminus N(t_i)$ as well as no vertices of $E_{ab}$ with $i\not\in \{a,b\}$, implying that there are no neighbors for $v_1$ to match to; a contradiction.

Thus, we have $\{t_1,\ldots,t_k\}\subseteq L'$ and $L'$ has a nonempty intersection with each set $V_1,\ldots,V_k$. Because $L'$ has size at most $2k$ this directly implies that its size is exactly $2k$ and that it contains exactly one vertex of each set $V_i$, say $\{v_i\}=L'\cap V_i$. It remains to show that $\{v_1,\ldots,v_k\}$ is a clique in $G$. Assume for contradiction that this is not the case, say that $v_i$ and $v_j$ for $i\neq j$ are not adjacent in $G$. Consider the requirement of a matching of $L'\setminus \{t_i\}$ into $R'\setminus N(t_i)$: Absence of $t_i$ ensures that no vertex of $F_j$ is present. Moreover, no vertices of $E_{ab}$ for $i\notin \{a,b\}$ are in $R'\setminus N(t_i)$. In particular, for vertex $v_j$ this only leaves vertices in $E_{ij}$ or $E_{ji}$ (depending on whether $i<j$ or $i>j$). Because $v_i$ and $v_j$ are not adjacent, however, and no other vertex of $V_i$ is in $L'$, no such vertices exist in $R'\supseteq R'\setminus N(t_i)$. Thus, $v_j$ cannot be matched; a contradiction.

It follows that the vertices $v_1,\ldots,v_k$ must indeed form a clique in $G$. This completes the proof.
\end{proof}

The above proof also shows that the problem of testing whether a given bipartite graph \hbox{$G = (L,R;E)$} has an identifiable $\ell$-subgraph induced by a set $J\subseteq L$ with $|J|= k$ is \W{1}-hard (with respect to parameter $k$).

\begin{theorem}\label{theorem:minidsnk}
\MINIDSnk is \W{1}-hard.
\end{theorem}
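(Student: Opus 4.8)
The plan is to give a parameterized reduction from \MCCk, reusing as much of the structure of the previous construction as possible, but now with a parameter that is $|L|-k$ rather than $k$. Recall that $\MINIDSnk$ asks for a set $L_0\subseteq L$ of size at least $k$ such that $G-N[L_0]$ is identifiable; equivalently, deleting at most $|L|-k$ vertices of $L$ (together with their right-neighborhoods) yields an identifiable graph, and the parameter is $|L|-k$. So here the \emph{complement} of the $\ell$-subgraph's left part — the set of deleted left vertices — must be small, on the order of a function of $k$. The idea is therefore to take the graph $G'=(L,R;E')$ produced by the previous reduction (with the vertices $V$ and the special vertices $t_1,\dots,t_k$), whose identifiable $\ell$-subgraphs of the right size correspond exactly to multicolored $k$-cliques, and \emph{pad} it so that the intended $\ell$-subgraph is now the one obtained by deleting only $|V|-k$ of the left vertices. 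Concretely, I would keep all of $L = V\cup\{t_1,\dots,t_k\}$, so $|L| = |V|+k$, and set $k'' := |L| - (|V|-k) = 2k$, so that an $\ell$-subgraph induced by a set $J$ with $|J|\ge k''=2k$ is sought, while the parameter is $|L|-k'' = |V|-k$. But this parameter still depends on $|V|$, not just $k$, so a direct copy does not work: I need to modify the instance so that the number of left vertices one is \emph{forced} to keep is close to $|L|$, i.e., the number one may delete is bounded by a function of $k$.

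The cleaner route is to redesign the reduction so that the relevant deletions are few. I would build a bipartite graph in which there is a small ``core'' that must survive and that encodes the clique-checking gadget from the previous proof (the sets $E_{ij}$, $F_i$, and the special vertices $t_i$), plus, for each color class $V_i$, a selection gadget forcing us to delete all but one vertex of $V_i$. If $|V_i|=n_i$, deleting all but one costs $n_i-1$ left-vertex deletions, for a total of $\sum_i(n_i-1) = |V|-k$ deletions — still not bounded by $g(k)$. To fix this, I would instead make the ``kept'' side large: add, for each color $i$ and each vertex $v\in V_i$, a large pendant gadget attached to $v$ whose presence in the $\ell$-subgraph is forced (e.g., enough private pendant vertices on the right with their own forced left vertices, contributing heavily to $|L|$), and arrange the adjacencies and the target $k'$ so that an identifiable $\ell$-subgraph of the required size must keep \emph{all} vertices of $V$ except that within each color class the ``real'' connectivity gadget can be satisfied by exactly one representative; deletions are then confined to a bounded number of auxiliary vertices per color, giving parameter $O(k)$ or $O(k^2)$.

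The key steps, in order: (1) state the construction, with $|L|$ dominated by forced gadget vertices and $k'$ chosen so that $|L|-k'$ is bounded by a function of $k$ alone; (2) show the forward direction — given a multicolored $k$-clique $C=\{v_1,\dots,v_k\}$, delete exactly the non-core, non-selected vertices and verify, exactly as in the previous proof (matchings using the $F_i$ vertices and the edge-vertices in $E_{ij}$ witnessing that $C$ is a clique), that $G-N[L_0]$ is identifiable, with $|L|-|L_0|$ equal to the intended bound; (3) show the converse — any $L_0$ with $|L_0|\ge k'$ whose complement $\ell$-subgraph is identifiable must leave all $t_i$ present (by the same two-special-vertices-absent and one-special-vertex-absent arguments as before, adapted to the new gadgets), must retain at least one vertex of each $V_i$ (else $t_i$ cannot be matched in $R'\setminus N(t_i)$), and by the size/counting constraint can retain at most one per class, and finally that these representatives form a clique (else the $E_{ij}$-vertex witnessing the edge $v_iv_j$ is missing, so $v_j$ cannot be matched in $R'\setminus N(t_i)$); (4) check that the reduction is polynomial-time computable and that the new parameter is $\le g(k)$.

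The main obstacle I anticipate is step (1): engineering the padding so that the parameter $|L|-k$ depends only on $k$ while \emph{simultaneously} (a) the intended $\ell$-subgraph is forced to keep essentially all of $L$, (b) deleting the ``wrong'' left vertices (e.g., a $t_i$, or two vertices of some $V_i$, or none of some $V_i$) destroys identifiability via the same matching obstructions used above, and (c) the forced pendant gadgets themselves do not accidentally create new identifiable $\ell$-subgraphs or spoil the matchings that certify identifiability in the yes-case. I expect that attaching, to each vertex $v\in V$ and to each $t_i$, private ``twin'' structures (pairs of a new left vertex and a new right vertex, pendant, so they are trivially matched and trivially forced to stay since deleting them only shrinks $L_0$) will let me inflate $|L|$ without interfering with any of the existing matching arguments; the delicate bookkeeping is in the exact counts so that $|L|-k'$ comes out to something like $2k$ or $2\binom{k}{2}$, and in re-running the three matching obstructions of the previous proof verbatim in the padded graph, which should go through since pendant vertices are never neighbors of any $t_i$ or of any $E_{ij}$-vertex.
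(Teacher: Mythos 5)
Your proposal correctly identifies the obstacle---the parameter is the number of \emph{deleted} left vertices, so that number, not the number of kept vertices, must be bounded by a function of $k$---but it stops short of the one idea that makes the reduction work, and the padding you lean on cannot supply it. Inflating $|L|$ with forced pendant twins does not change how many left vertices the intended solution must delete: if the intended identifiable $\ell$-subgraph is still obtained (as in the \MINIDSk reduction) by removing all of $V$ except a clique, the deleted set has size $|V|-k$ no matter how much you pad, since padding only enlarges the kept side. Your alternative sketch---keep all of $V$ and encode the choice of one representative per color class by deleting $O(1)$ auxiliary left vertices per color---is exactly the crux of the reduction, and no gadget for it is given: you do not say which right vertices the auxiliary vertices dominate, nor how the identifiability tests of the resulting $\ell$-subgraph would then verify that the selected representatives are pairwise adjacent in $G$. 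The specific padding you describe also runs into a dilemma in the converse direction. Since in your plan all of $V$ survives, every $v\in V$ needs matching partners in every test, including the tests for $L'\setminus\{t_i\}$. If its private partners are not adjacent to the $t_i$ (as you propose), those tests can be satisfied using only the pendants, the $E_{ij}$-based obstruction is bypassed, and no clique structure is enforced; if instead they are adjacent to all $t_i$ (like the $F$-vertices of the first reduction), then the $t_i$-tests demand edge-type partners for \emph{all} surviving vertices of $V$, and with all of $V$ kept there is no selection mechanism at all. You flag this engineering step yourself as the anticipated obstacle; it is precisely the missing content of the proof.

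The paper resolves it by changing \emph{what is deleted} rather than padding what is kept: the intended solution deletes exactly the $k$ clique vertices, so the parameter is $k$ by construction, and the edge gadgets carry \emph{complemented} adjacencies---for each edge $e=\{v_i,v_j\}$ one creates $r=n+k$ vertices $q_{e,\ell}$ adjacent to all of $V_i\setminus\{v_i\}$ and $V_j\setminus\{v_j\}$---so that after deleting a transversal $C$ (one vertex per class), the surviving vertices of $E_{ij}$ are exactly those whose underlying edge joins the two deleted representatives. Each $v\in V$ gets $k+1$ private right vertices adjacent to all of $T$, which makes the tests for $w\in V$ easy, while the tests for the $t_i$ (who dominate all private vertices and all $E_{ab}$ with $i\notin\{a,b\}$) force $T\cap C=\emptyset$, at most one deleted vertex per class, and adjacency of the deleted representatives. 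Without this selection-by-deletion mechanism, or a concrete substitute, your outline does not yet constitute a parameterized reduction.
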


\begin{proof}
We give a parameterized reduction from \MCCk to \MINIDSnk. Let $(G=(V,E),\phi,k)$ be an instance of \MCCk.
W.l.o.g., assume $k\geq 3$ or else solve the instance in polynomial time. Let $n:=|V|$ and let $V_i:=\phi^{-1}(i)$ for $i\in\{1,\ldots,k\}$. Assume w.l.o.g.~that each set $V_i$ contains at least two vertices (else we can restrict the graph to the subgraph induced by the neighborhood of $v_i\in V_i$ and drop color $i$ to get an equivalent instance).

We will construct a bipartite graph $G'=(L,R;E')$ such that $(G,k,\phi)$ is yes for \MCCk if and only if $(G',k')$ is yes for \MINIDSnk for $k'=|L|-k$. That is, $(G,k,\phi)$ should be yes if and only if the graph $G'$ contains an $\ell$-identifiable subgraph that is induced by a set $L'$ of size at most $|L|-k$. Note that $(G',k')$ has a parameter value of $|L|-k'=|L|-(|L|-k)=k$. Recall that we can equivalently ask for the existence of a set $L_0\subseteq L$ of size at least $k$ such that $G'-N[L_0]$ is identifiable since then $L\setminus L_0$ is of size at most $|L|-k$ and can play the role of the requested set $L'$ (and conversely $L\setminus L'$ is a feasible choice for $L_0$). Define $r:=n+k$; this value will be used in the construction.

\medskip

\emph{Construction.}
The graph $G'=(L,R;E')$ is defined as follows:
\begin{itemize}
 \item The vertex set $L$ consists of $V$ as well as a set $T$ of special vertices $t_1,\ldots,t_k$.
 \item The set $R$ contains for each vertex $v\in V$ a set of $k+1$ vertices $p_{v,1},\ldots,p_{v,k+1}$ whose only neighbor in $V$ will be $v$ (so they are in a limited sense a private neighbors of $v$). Let $F_i$ denote the set of vertices $p_{v,\ell}$ with $v\in V_i$ for each $i\in\{1,\ldots,k\}$. (The exact number of these vertices per vertex $v$ will be immaterial so long as they are at least $k+1$.)
 \item The set $R$ furthermore contains vertices derived from the edges of $G$. Let $e=\{v_i,v_j\}$ be any edge of $G$ with $v_i\in V_i$ and $v_j\in V_j$. Create $r$ vertices $q_{e,1},\ldots,q_{e,r}$ and add them to $R$. Make each of them adjacent to all vertices of $V_i\setminus \{v_i\}$ and all vertices of $V_j\setminus\{v_j\}$. Do this for all edges for any $1\leq i<j\leq k$ and let $E_{ij}$ contain the vertices $q_{e,1},\ldots,q_{e,r}$ for edges $e$ between $V_i$ and $V_j$ in $G$. The set $R$ is thus the union of sets $F_a$ for $1\leq a\leq k$ and sets $E_{ab}$ for $1\leq a<b\leq k$. (Again, the exact value of $r$ is not important so long as $r\geq n+k$.)
 \item Make each vertex $t_i\in T\subseteq L$ adjacent to all vertices of each set $F_a$ with $1\leq a\leq k$. Furthermore, make each $t_i$ adjacent to all vertices of each set $E_{ab}$ with $i\notin\{a,b\}$.
\end{itemize}
Define $k'=|L|-k$ and return the instance $(G',k')$. An example construction is shown in Fig.~\ref{fig:reduction-2}.

\begin{figure}[h!]
  \centering
   \includegraphics[width=170mm]{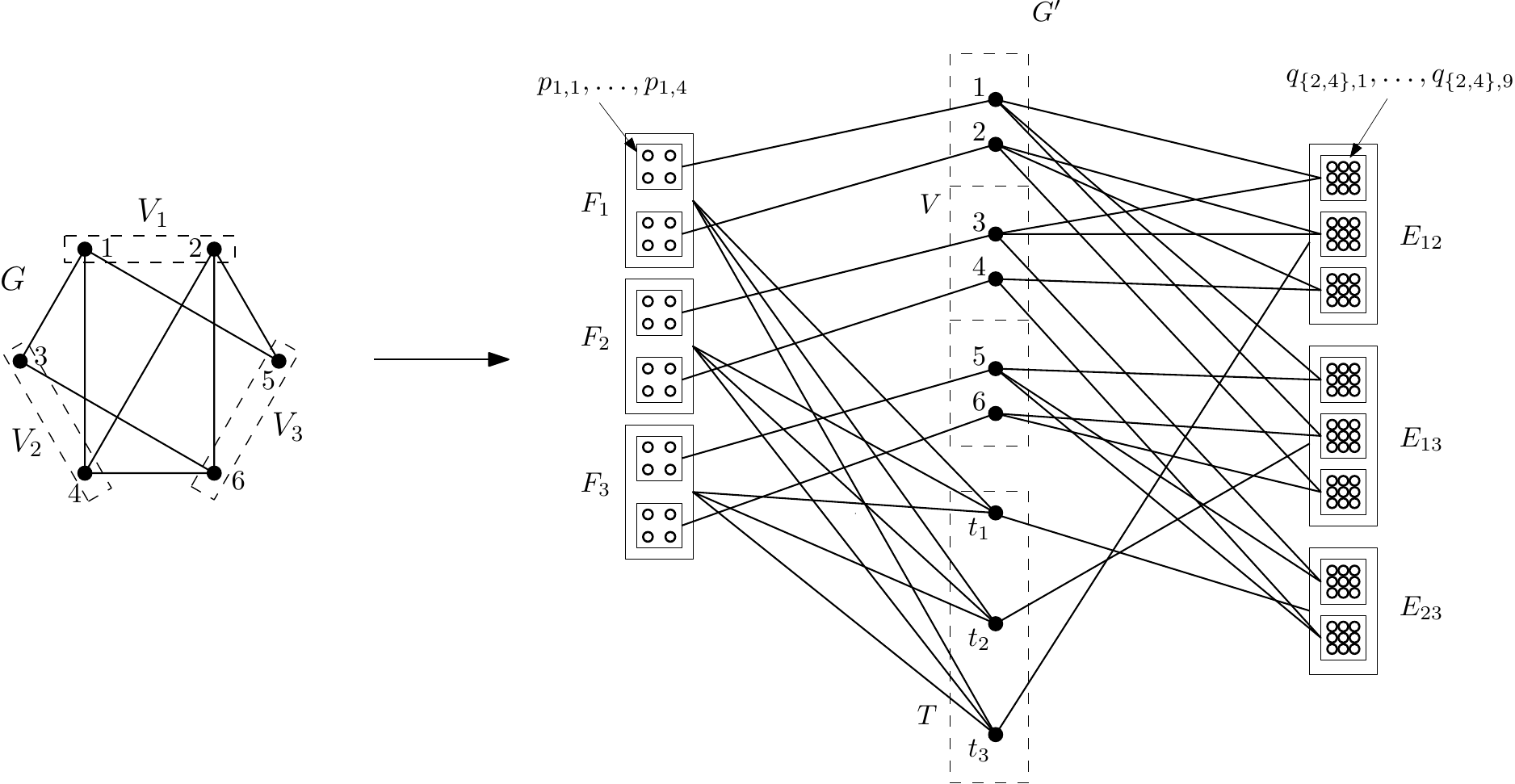}
  \caption{An example construction of the bipartite graph $G' = (L,R;E')$ from an input $(G,k,\phi)$ to \MCCk.
  In the example $k = 3$, vertices in $L$ and $R$ are colored black and white, respectively, and an edge between
  vertex $u\in L$ and a set $S$ of vertices in $R$ means that $u$ is adjacent to all vertices in $S$.}\label{fig:reduction-2}
\end{figure}

Clearly this construction can be performed in polynomial time and we already pointed out that the parameter value of $(G',k')$ is equal to $k$.
(Parameter value bounded by any function of $k$ would be enough for a parameterized reduction.)
It remains to prove correctness, that is, that $G$ has a $k$-clique containing exactly one vertex of each set $V_i$
if and only if $(G',k)$ is a yes instance of \MINIDSnk.

\medskip

\emph{Correctness.}
Assume first that $(G,k,\phi)$ is yes for \MCCk. Thus, $G$ contains a clique $C$ containing exactly one vertex of each set $V_i$.
We claim that $L':=L\setminus C$ induces an identifiable $\ell$-subgraph in $G'$. The $\ell$-subgraph induced by $L'$ is exactly $G''=G'[L',R']$ where $R'=R\setminus N(L\setminus L')=R\setminus N(C)$. We need to show that $G'$ is identifiable.

Let us first see that $G''$ has a matching of $L'\setminus \{w\}$ into $R'\setminus N(w)$ for any $w\in V\cap L'$. Fix any such vertex $w$: Recall that in the construction we made for each vertex $v\in V$ vertices $p_{v,1},\ldots,p_{v,k+1}$ such that $v$ is their only neighbor in $V$. For each vertex $v\in V\setminus (L'\cup\{w\})$ all these vertices $p_{v,\ell}$ exist in $R'\setminus N(w)=R\setminus N(C\cup\{w\})$ since $C\cup\{w\}\subseteq V$ so we can match each $v$ to $p_{v,1}$. Moreover, for any $v\in L'\setminus \{w\}$ we can match all $k$ vertices of $T$ to $p_{v,2},\ldots,p_{v,k+1}$. (Here we tacitly assume that $G$ has more than $k+1$ vertices, which is w.l.o.g.) This completes the required matching.

Let us now exhibit a matching in $G''$ of $L'\setminus \{t_i\}$ into $R''=R'\setminus N(t_i)$ for any $t_i\in T\subseteq L'$: Note that $R''$ in particular does not contain vertices of sets $F_a$ for $1\leq a\leq k$ nor vertices of $E_{ab}$ for $i\notin \{a,b\}$ since all those are adjacent to $t_i$ in $G'$ (so they are also adjacent to $t_i$ in $G''$). It remains to use vertices of $E_{ab}$ with $i\in\{a,b\}$, recalling that many of them are not present already in $R'=R\setminus N(C)$. Fix $j\neq i$. Let $\{v_i\}:= C\cap V_i$ and $\{v_j\}:= C\cap V_j$. Since $C$ is a clique, vertices $v_i$ and $v_j$ are adjacent in $G$. For the corresponding edge $e=\{v_i,v_j\}$ we created vertices $q_{e,1},\ldots,q_{e,r}$ in $E_{ij}\subseteq R$ (or $E_{ji}$ if $j<i$). Each $q_{e,l}$ is adjacent to all of $V_i\setminus \{v_i\}$ and $V_j\setminus\{v_j\}$ but not to $v_i$ or $v_j$; they are not adjacent to any vertex of $V\setminus(V_i\cup V_j)$. Thus, all of these vertices are present in $R'$ and hence also in $R''=R'\setminus N(t_i)$. We can therefore match all vertices of $V_i\setminus \{v_i\}$, $V_j\setminus\{v_j\}$, and $T\setminus\{t_i,t_j\}$ to them since these are in total less than $r=n+k$ vertices. By repeating the argument for all $j'\in\{1,\ldots,k\}\setminus\{i,j\}$ we can also match vertices in $\bigcup_{j'}\left(V_{j'}\setminus C\right)\cup\{t_j\}$, obtaining a matching for all of $L'\setminus\{t_i\}=L\setminus (C\cup \{t_i\})$. (Here we need that $k\ge 3$ so that we can match $t_j$.)

It follows that the $\ell$-subgraph induced by $L'$ is indeed identifiable. Since $|L'|\leq |L|-k$ it follows that $(G',k')$ is yes for \MINIDSnk.

Assume now that $(G',k')$ is yes for \MINIDSnk and let $L'$ be a subset of $L$ of size at most $k'=|L|-k$ such that the $\ell$-subgraph induced by $L'$, namely $G'':=G'-N[L']=G'[L',R']$ where $R'=R\setminus N(L\setminus L')$, is identifiable. Let $C:=L\setminus L'$. This is a set of size at least $k$. We will show that $C$ is a subset of $k$ vertices of $V$ that form a clique in $G$ with $|C\cap V_i|=1$ for $1\leq i\leq k$. Note that $R'=R\setminus N(C)$.

We begin with some observations: If $V\subseteq C$ then $R'=\emptyset$ because each vertex of $R$ is adjacent to at least one vertex of $V$. In this case, $G''$ could not be identifiable since that requires having at least one edge. Thus, $V\setminus C\neq \emptyset$ and we pick an arbitrary vertex $v_0\in V\setminus C\subseteq L'$ to be used later. Similarly, if $T\subseteq C$ then again $R'=\emptyset$ and $G'[L',R']$ cannot be identifiable. We pick $t_0\in T\setminus C$ to be used later. (Note that $t_0=t_i$ for some $1\leq i\leq k$.)

We will now prove several restrictions on $C$ by contradiction-based arguments. The first two aim at proving that $T\cap C=\emptyset$.

Assume for contradiction that $|T\cap C|\geq 2$. Thus, $G''$ being identifiable implies that there must be a matching of $L'\setminus \{t_0\}$ to $R'\setminus N(t_0)$. Say $t_i,t_j\in T\cap C$ with $i\neq j$, then $R'\setminus N(t_0)$ contains no vertices of $R$ that are adjacent to any of $t_0$, $t_i$, or $t_j$. This is a contradiction since every vertex of $R$ is adjacent to at least one of them: Vertices in any $F_a$ are adjacent to each vertex of $T$ and vertices in any $E_{ab}$ are only not adjacent to two vertices of $T$, namely $t_a$ and $t_b$. Thus, $|T\cap C|\leq 1$.

Assume for contradiction that $|T\cap C|=1$ and assume w.l.o.g.~that $\{t_1\}=T\cap C$. Because $G''$ is identifiable there must be a matching of $L'\setminus\{t_2\}$ into $R'\setminus N(t_2)$ in $G''$. Note that in $R'\setminus N(t_2)$ there is no vertex of any set $F_a$. Similarly, vertices of $E_{ab}$ are not in $R'\setminus N(t_2)$ unless $1\in\{a,b\}$ and $2\in\{a,b\}$ since otherwise they are adjacent to at least one of $t_1$ or $t_2$. This in turn implies that $L'$ contains no vertices from $V_a$ for $a\notin\{1,2\}$ since they have no neighbors in $R'\setminus N(t_2)$. Consequently, even $R'\supseteq R'\setminus N(t_2)$ contains no vertex of any set $E_{ab}$ with $(a,b)\neq(1,2)$ because the vertices of other sets $E_{ab}$ are all adjacent to some vertex of $V\setminus(V_1\cup V_2)\subseteq L\setminus L'=C$. Now, consider the requirement of a matching of $L'\setminus\{t_3\}$ into $R'\setminus N(t_3)$ in $G''$. We now get that there are no vertices of sets $F_a$ nor of sets $E_{ab}$. The latter holds because in $R'$ only vertices of $E_{12}$ can exist but all those are adjacent to $t_3$ and hence not in $R'\setminus N(t_3)$. Thus, $R'\setminus N(t_3)$ is empty and we cannot match the vertex $v_0$ anywhere; a contradiction.

We now know that $C\cap T=\emptyset$. Assume that $C$ contains at least two vertices of the same set $V_i$, i.e., that $|C\cap V_i|\geq 2$. Let $v_i,v'_i\in C\cap V_i$ with $v_i\neq v'_i$. Crucially, for any $1\leq a<b\leq k$ with $i\in\{a,b\}$, all vertices in $E_{ab}$ are adjacent to at least one of $v_i$ and $v'_i$, by construction: A vertex $q_{e,l}$ for $e=\{p,q\}$ with $p\in V_i$ and $q\in V_j$ is adjacent to all vertices of $V_i\setminus \{p\}$. Because only one of $v_i$ and $v'_i$ can be equal to $p$ it follows that $q_{e,l}$ is adjacent to at least one of the two. Thus, no vertex of $E_{a,b}$ is present in $R'\subseteq R\setminus N(\{v_i,v'_i\})$ for $a$ and $b$ with $i\in\{a,b\}$.

Now, because $G''$ must be identifiable there must be a matching of $L'\setminus\{t_i\}$ into $R''=R'\setminus N(t_i)$. There are no vertices of any set $F_a$ in $R''$ because all of them are adjacent to $t_i$. For the same reason there are no vertices of sets $E_{ab}$ when $i\notin \{a,b\}$. From above we know that for $E_{ab}$ with $i\in\{a,b\}$ no such vertices are present in $R'\supseteq R''$. Thus, vertex
$v_0\in V\setminus C\subseteq L'$ cannot be matched; a contradiction.

At this point we know that $C$ contains no vertices of $T$ and at most one vertex of each set $V_i$. Because the size of $C$ is at least $k$ this implies that $C$ contains exactly one vertex of each set $V_i$ and no further vertices (it is of size exactly $k$). Let $\{v_i\}=C\cap V_i$ for $1\leq i\leq k$. It remains to show that the vertices $v_1,\ldots,v_k$ form a clique in $G$.

Assume for contradiction that $v_i\in V_i$ and $v_j\in V_j$ are not adjacent in $G$ for some $i\neq j$; w.l.o.g.~$i<j$. Again the construction of the edge-related vertices $q_{e,l}$ is important here: Consider any edge $e=\{v'_i,v'_j\}$ with $v'_i\in V_i$ and $v'_j\in V_j$. s(Here we also tacitly assume that there is such an edge, which is w.l.o.g.~as otherwise $(G,k,\phi)$ is a no instance to \MCCk.)
Because there is no edge between $v_i$ and $v_j$, we must have $v'_i\neq v_i$ or $v'_j\neq v_j$ (or both). This directly implies that each vertex $q_{e,l}\in E_{ij}$ is adjacent to $v_i$ or $v_j$ (or both) because it is adjacent to all of $V_i\setminus\{v'_i\}$ and all of $V_j\setminus\{v'_j\}$. It follows that there are no vertices of $E_{ij}$ in $R'\subseteq R\setminus N(\{v_i,v_j\})$.

Because $G''$ is identifiable there must be a matching of $L'\setminus\{t_i\}$ into $R'\setminus N(t_i)$. In the latter set there are no vertices of any set $F_a$ because they are subsets of $N(t_i)$ and similarly no vertices of $E_{ab}$ if $i\notin\{a,b\}$. Consider any vertex $v''_j\in V_j\setminus \{v_j\}\subseteq L'$ (using that $C\cap V_j=\{v_j\}$ and $|V_j|\geq 2$): Its neighbors in
$G'$ are in $F_j$ and in sets $E_{ab}$ with $j\in\{a,b\}$. This leaves only the set $E_{ij}$ since we need $i\in\{a,b\}$ and $j\in\{a,b\}$ or else none of the vertices are in $R'$, but we already know that no vertices of $E_{ij}$ are in $R'\supseteq R'\setminus N(t_j)$; a contradiction.

It follows that the vertices of $C$ form a clique in $G$, with exactly one vertex from each set $V_i$; this proves that $(G,k,\phi)$ is yes for \MCCk and completes the proof.
\end{proof}


\section{Concluding remarks}\label{sec:conclusion}

\begin{sloppypar}
In this paper, we showed that the \IDS and the \MAXIDS problems are polynomially solvable and that
two natural parameterized variants of the \MINIDS problem are \W{1}-hard. Regarding approximation issues, the \MINIDS problem was shown to be \APX-hard~\cite{DAM1}, however, its exact (in)approximability status remains an open question.

In~\cite{DAM1}, two other \NP-hard problems related to identifiability were studied: finding the minimum number of edges that one must delete from a given identifiable graph to destroy identifiability and finding the smallest size of a set $R'\subseteq R$ such that the graph $G[L,R']$ is identifiable.
The hardness proof for the former problem shows that the problem is also \W{1}-hard with respect to its natural parameterization. More precisely,
one can combine the \NP-hardness proof from~\cite{DAM1} and the proof of the \NP-hardness of the problem from which that reduction was made (finding the minimum number of edges that one must delete from a given bipartite graph in order to decrease its
matching number)~\cite[Theorems 3.2 and 3.3 and their proofs]{MR2519166} to obtain a parameter-preserving reduction from the parameterized clique problem, with respect to its natural parameterization, which is \W{1}-hard. We leave for future research the determination of the parameterized complexity status of the latter problem, as well as the (in)approximability status of both problems.
\end{sloppypar}

\section*{Acknowledgements}

This work was supported in part by the Slovenian Research Agency (I$0$-$0035$, research program P$1$-$0285$ and research projects N$1$-$0032$, J$1$-$5433$, J$1$-$6720$, J$1$-$6743$, and J$1$-$7051$). Part of this research was carried out during the visit of M.M. to S.K. at University of Bonn; their hospitality and support is gratefully acknowledged.

\end{document}